\documentclass[pra,twocolumn,floatfix,notitlepage,groupedaddress]{revtex4-1}

\usepackage[hidelinks,colorlinks=false,citecolor=black,linkcolor=black]{hyperref}
\usepackage{mathtools}
\usepackage{subcaption}
\captionsetup{justification   = raggedright,
              singlelinecheck = false}

\usepackage{amssymb,amsmath,amstext,amsthm}
\usepackage{graphicx}
\usepackage{epstopdf}
\usepackage{color}
\usepackage{bm}
\usepackage{appendix}
\usepackage[T1]{fontenc}
\usepackage{bbm}
\usepackage{latexsym}

\usepackage{algorithm,algorithmic}

\usepackage{float}
\usepackage{verbatim}
\usepackage{lipsum}
\usepackage{braket}
\usepackage{ulem}

\newcommand{\del}[0]{\partial}

\newtheorem{theorem}{Theorem}








\begin{document}

\title{Quantum radar with unreflected photons}
\author{T.J.\,Volkoff}
\affiliation{Theoretical Division, Los Alamos National Laboratory, Los Alamos, NM, USA.}

\begin{abstract}
Two descriptions are introduced and analyzed for a reflectivity estimation and detection scheme that does not involve measurement of photons scattered by the target. One description, provided by the Hamiltonian dynamics of the full transmitter/receiver optical system, incurs an exponential cost in transmitter intensity for a given estimation sensitivity but is linearly improved with the intensity of the thermal background. The other description, based on optical quantum circuits, exhibits sensitivity around a factor of 1/2 of the optimal entanglement-assisted scheme, but incurs an inverse linear reduction in sensitivity with increasing thermal background. The results have applications for the design of optically active receivers based on combining echo-seeded spontaneous parametric downconversion and induced coherence due to photon indistinguishability.
\end{abstract}
\maketitle

\section{\label{sec:intro}Introduction}

A quantum radar system is a quantum information processing protocol that utilizes a source of nonclassical or entangled states of the electromagnetic field to obtain information about a target object (e.g., rough information like its presence or absence as in quantum illumination \cite{shapiro}, or detailed information like its phase space trajectory). Like classical radar, quantum radar schemes can be active or passive, and have a variety of transmitter/receiver structures \cite{skolnik}. The distinguishing feature of quantum radar systems compared to classical counterparts is the use of optically nonclassical or entangled states. Importantly, the nonclassicality or entanglement can be a global property of a multimode state in which some modes are transmitted to target while the complementary modes are kept locally in a phase-sensitive quantum memory for later interference with the reflected echo \cite{lloyd}. A quantum radar system can also make use of entanglement and nonclassicality generated at the receiver component, a technique which has led to near-optimal receivers for Gaussian quantum illumination \cite{PhysRevA.80.052310}. However, entanglement and nonclassicality do not exhaust the quantum properties of optical systems that may be employed for quantum information processing. Recently, it has become clear that by aligning multiple SPDC crystals in a spatial configuration in which downconverted modes are aligned, it is possible to estimate optical phases by measuring photons that never passed through the phase shifting medium \cite{Lemos2014,leecho,PhysRevLett.131.033603,PhysRevA.92.013832}. Applications of this \textit{induced coherence} technique, which relies on photon indistinguishability, include noise-resistant imaging schemes. Looking beyond the setting of optical phase estimation, the present work analyzes the simplest quantum radar system that combines transmitter entanglement and induced coherence at the receiver for estimation or detection of a reflective optical element. Compared to previously studied classical radar or quantum radar systems  which exhibit reflectivity estimation performance scaling as $O({N_{S}\over N_{B}})^{-1}$ in the quantum Cram\'{e}r-Rao bound when transmitting $N_{S}$ photons per pulse in an optical background of $N_{B}$ photons, our main results show that induced coherence radar systems measuring unreflected photons can exhibit a drastically different scaling $O(N_{B}\log N_{S})^{-1}$. The overall dependence provides the possibility of operating regimes in which these novel radar systems surpass commonly cited limitations of target detection with Gaussian states \cite{PhysRevLett.101.253601,Volkoff_2024}.

\begin{figure}[tbh]
  \centering
  \includegraphics[scale=0.7]{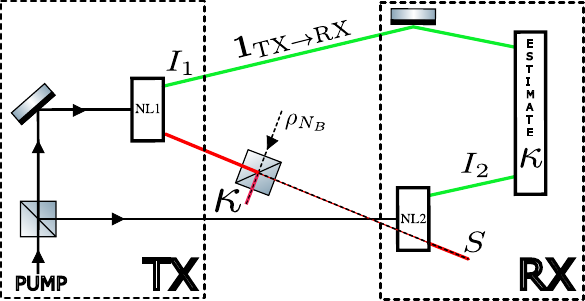}
  
  \caption{ \centering Quantum radar scheme based on the Zou-Wang-Mandel effect. The effective optical diagram has transmitter (TX) and receiver (RX) components. $I_{1}$ is a quantum memory register that appears also in analyses of quantum illumination. The $I_{2}$ register is also entangled with the $S$ mode due to the SPDC process in crystal NL2. The setup can be described by circuit-based dynamics (Model 1) or Hamiltonian dynamics (Model 2). Although the $S$ mode could be measured, we focus on the quantum information about $\kappa$ contained in the unreflected register $I_{1}I_{2}$. }\label{fig:zwmfig}
\end{figure}

In the non-perturbative framework of Gaussian states and operations it is not immediately obvious how to describe systems of SPDC crystals in which optical modes are indistinguishable \cite{PhysRevA.109.023704}. One approach is provided by the example of the $SU(1,1)$ interferometer, which utilizes two $\pi$-phase shifted SPDC crystals to obtain Heisenberg scaling for sensitivity of optical phase shift estimation \cite{PhysRevA.33.4033}. The description of the $SU(1,1)$ interferometer was formulated by direct analogy to the Mach-Zehnder interferometer (i.e., the $SU(2)$ interferometer) by replacing the optical beamsplitters with $SU(1,1)$ operators describing photon number non-conserving (i.e., active) optical elements such as four-wave mixers or SPDC sources \cite{PhysRevA.33.4033}. This approach motivates an optical quantum circuit description of networks of aligned SPDC processes. On the other hand, care must be taken when utilizing quantum circuits to describe dynamics of nonlinear optical systems, because quantum circuits do not necessarily correspond to Hamiltonians that are quadratic in the photon creation and annihilation operators \cite{burgarth2024central}. Even if the quantum circuit can be written as time-evolution under a quadratic Hamiltonian, the mode couplings in the resulting Hamiltonian may not correspond to physical processes in the given system. With these facts in mind, the present work considers two models for the quantum radar system in Fig. \ref{fig:zwmfig}, one circuit-based and one based on the quadratic Hamiltonian dynamics of the full transmitter/receiver system. Common to these models are the following components: 1. two phase-coherent SPDC processes, at the transmitter and receiver respectively, 2. detection of unreflected photons in the modes $I_{1}$ and $I_{2}$, and 3. alignment of the optical echo from the transmitter with a downconversion mode in the receiver which is responsible for induced coherence in the system.

\section{Reflectivity sensing}

We first define and analyze the classical transmitter scheme that will be the standard for comparison for the reflectivity estimation schemes developed in this work. Consider a protocol which distributes a fixed intensity $N_{S}$ to the degrees of freedom (i.e., quadratures) of a single-mode, optically classical Gaussian state which is used to probe a reflective target. Specifically, the transmitter state has the form $\rho_{z,N_{\text{th}}}=D(z)\rho_{0,N_{\text{th}}}D(z)^{\dagger}$ where
\begin{equation}
    \rho_{0,N_{\text{th}}}:={1\over N_{\text{th}}+1}\sum_{n=0}^{\infty}\left( N_{\text{th}}\over N_{\text{th}}+1\right)^{n}\ket{n}\bra{n}
\end{equation}
is a single-mode thermal state with intensity $N_{\text{th}}$ and, for $z\in \mathbb{R}^{2}$, the displacement operator is defined by $D(z):=e^{iR\Omega z}$ with $R=(q,p)$ the row vector of canonical operators, and $\Omega := \begin{pmatrix}
    0&1\\-1&0
\end{pmatrix}$ the symplectic form on $\mathbb{R}^{2}$. The mean vector and covariance matrix of the probe state are respectively
\begin{align}
    m_{\rho_{z,N_{\text{th}}}}&:= \langle R\rangle_{\rho_{z,N_{\text{th}}}} = (z_{1},z_{2}) \nonumber \\
    \Sigma_{\rho_{z,N_{\text{th}}}}&:= {1\over 2}\langle [(R-m_{\rho_{z,N_{\text{th}}}})^{\intercal},R-m_{\rho_{z,N_{\text{th}}}}] \rangle_{\rho_{z,N_{\text{th}}}} \nonumber \\
    &= \left( N_{\text{th}}+{1\over 2}\right)\mathbb{I}_{2}
    \label{eqn:mvcov}
\end{align}
with total signal energy $N_{S} = {\Vert z\Vert^{2}\over 2} + N_{\text{th}}$.
We consider the channel describing reflection from the target to be a bosonic thermal attenuator  $\mathcal{N}_{\kappa,N_{B}}$ which enacts the transformation $\rho_{z,N_{\text{th}}} \mapsto \mathcal{N}_{\kappa,N_{B}}(\rho_{z,N_{\text{th}}})$, thereby leaving the mean vector invariant and mapping the covariance matrix to 
\begin{align}
    \Sigma_{\mathcal{N}_{\kappa,N_{B}}(\rho_{z,N_{\text{th}}})}&= {1\over 2}(2\kappa N_{\text{th}} + 2(1-\kappa)N_{B}+1)\mathbb{I}_{2}.
    \label{eqn:mvcov2}
\end{align}

The parameter $\kappa$ scales with the inverse square of the distance from transmitter to target and linearly with the area of the receiving aperture, according to the classical radar equation \cite{skolnik,PhysRevResearch.2.023414}. The reflected probe state can equivalently be described using the Stinespring form of $\mathcal{N}_{\kappa,N_{B}}$
\begin{equation}
\mathcal{N}_{\kappa,N_{B}}(\rho_{z,N_{\text{th}}}):= \text{tr}_{E}\left[ U_{\kappa}\rho_{z,N_{\text{th}}}\otimes \rho_{0,N_{B}}U_{\kappa}^{\dagger}\right]
\label{eqn:thermalnoise}
\end{equation}
where $U_{\kappa}=e^{i\theta(\kappa)(a^{\dagger}b+h.c.)}$ with $\theta(\kappa):=\cos^{-1}\sqrt{\kappa}$ is a unitary beamsplitter acting on the probe and environment modes  (we have introduced the the creation and annihilation operators of mode $j$ according to $a_{j}^{\dagger} := {q_{j}-ip_{j}\over \sqrt{2}}$). The value $\kappa=0$ corresponds to the signal information being lost, and the signal mode being indistinguishable from the thermal environment. Note that the Stinespring form in (\ref{eqn:thermalnoise}) implies, by monotonicity of QFI under partial trace, that some information about the parameter $\kappa$ is lost to the environment, regardless of the environment temperature. For example, when $N_{B}=0$, the quantum Fisher information of the pure system-plus-environment state $U_{\kappa}\rho_{z,0}\otimes \ket{0}\bra{0}_{E}U_{\kappa}^{\dagger}$ is ${\Vert z\Vert^{2}\over 2\kappa(1-\kappa)}$, whereas the quantum Fisher information of  $\mathcal{N}_{\kappa,0}(\rho_{z,0})$ is ${\Vert z\Vert^{2}\over 2\kappa}$.  

Throughout this work, we utilize a single basic formula for the QFI for $\kappa \in (0,1)$
\begin{align}
    \mathcal{F}(\kappa)&= -4\del_{\eta}^{2}\sqrt{F(\rho_{\kappa},\rho_{\eta})} \big\vert_{\eta=0}
    \label{eqn:qfiform}
\end{align}
where $F(\rho_{\kappa},\rho_{\eta}):= \left(\text{tr}\sqrt{\sqrt{\rho_{\kappa}}\rho_{\eta}\sqrt{\rho_{\kappa}}} \right)^{2}$ is the quantum fidelity which has a known expression for Gaussian state arguments using symplectic invariants \cite{PhysRevA.86.022340}. Again restricting to the case of Gaussian states, the required second derivative can be simplified considerably using the Williamson decomposition of the respective covariance matrices (with a regularization required in the case that some symplectic eigenvalues are equal $1/2$, i.e., in the case that there are pure subsystems) \cite{saffuent}. The formula (\ref{eqn:qfiform}) expresses the relation between a distinguishability-based metric on the state space and the Bures distance \cite{PhysRevLett.72.3439,HUBNER1992239,PhysRevA.95.052320,zhou2019exact}.

\begin{table*}[t]
\centering
\begin{tabular}{p{0.2\linewidth}p{0.12\linewidth}p{0.25\linewidth}p{0.19\linewidth}}
\hline
 Transmitter model & Ent./Cla./Gau. & \centering QFI & Ref. \\
\hline
\hline
Coherent state& No/Yes/Yes & \centering $N_{S}\over \kappa$ & \cite{PhysRevLett.98.160401,9962776}
\\
Best single-mode\\
Gaussian& No/No/Yes & \centering ${N_{S}-O(\kappa)\over \kappa(1-\kappa)}$  & \cite{PhysRevLett.98.160401}
\\
Fock & No/No/No & \centering ${N_{S}\over \kappa(1-\kappa)}$ & \cite{PhysRevA.79.040305,PhysRevResearch.2.033389} (see Eq.(\ref{eqn:transf})) \\
Two-mode squeezed & Yes/No/Yes & \centering ${N_{S}\over \kappa(1-\kappa)}$& \cite{PhysRevResearch.6.013034}
\\
Unreflected photons \\(Model 2) & Yes/No/Yes & \centering ${\left( \log\left( \sqrt{N_{S}+1} + \sqrt{N_{S}} \right) \right)^{2} \over 2\kappa(1-\kappa)}$ & This work\\
Unreflected photons \\(Model 1) & Yes/No/Yes & \centering ${N_{S}(1+(1-\kappa)N_{S})\over \kappa(1-\kappa)(2+(2-\kappa)N_{S})}$ & This work\\
\hline
\end{tabular}
\caption{Reflectivity estimation with a noiseless transmitter with transmitted intensity $N_{S}$. Ent./Non./Gau. stands for Entangled/Classical/Gaussian. }
\label{tab:noiselessfisher}
\end{table*}

It is important to note that although estimation of the $\kappa$ parameter does not fall under the commonly used shift model of parameter estimation in which a parameter $\theta$ is imprinted on a $\theta$-independent probe state $\ket{\psi}$ according to $\ket{\psi}\mapsto e^{-i\theta A}\ket{\psi}$ for some self-adjoint operator $A$ \cite{holevo}, it is related to a shift model by change of variables. Specifically, let $A=a_{1}^{\dagger}a_{2}+h.c.$. Then taking $\theta(\kappa) = \cos^{-1}\sqrt{\kappa}$ gives the quantum state-valued function of $\kappa$ that we seek to analyze. Fixing this relation between the parameters, the general relations
\begin{equation}
    \mathcal{F}(\kappa) = {\mathcal{F}(\theta(\kappa)) \over 4\kappa(1-\kappa)} = {\mathcal{F}(\sqrt{\kappa}) \over 4\kappa}
    \label{eqn:transf}
\end{equation}
which relate the QFI on the state manifold parametrized by $\kappa$ to the QFI on the state manifold parametrized by $\theta$ to the QFI on the state manifold parametrized by $\sqrt{\kappa}$, hold regardless of whether the shift model is used to parametrize the state.
Relation (\ref{eqn:transf}) is often useful because in the shift model, $\mathcal{F}(\theta)$ is independent of $\theta$, so that the analytic behavior of $\mathcal{F}(\kappa)$ can be determined immediately from the denominator of (\ref{eqn:transf}).   

The QFI for the state in (\ref{eqn:thermalnoise}) is
\begin{align}
    \mathcal{F}(\kappa) &= {(N_{\text{th}}-N_{B})^{2}\over ((1-\kappa)N_{B} + \kappa N_{\text{th}}+1)((1-\kappa) N_{B} + \kappa N_{\text{th}})} \nonumber \\
    &{} + {\Vert z\Vert^{2}\over 2\kappa(2N_{\text{th}}\kappa + 2N_{B}(1-\kappa) + 1)}.
    \label{eqn:qfi1}
\end{align}
 Note that with $N_{\text{th}}= 0$, i.e., a pure displaced vacuum transmitter, our result for the QFI is not in agreement with Ref.\cite{9962776}. That is because we have not disregarded the ``shadow effect'', which accounts for information about the reflectivity $\kappa$ that is present due to scattering of the thermal environment by the target. Mathematically, neglecting the shadow effect amounts to taking $N_{B}\mapsto {N_{B}\over 1-\kappa}$ in the covariance matrix (\ref{eqn:mvcov2}), so that at $N_{\text{th}}=0$, the covariance matrix is $\kappa$-independent. Of course, at $N_{\text{th}}=N_{B}=0$, the result in (\ref{eqn:qfi1}) and Ref.\cite{9962776} are in agreement.

In the optical domain, $N_{B}\ll 1$ and one finds that subject to the total transmitter energy constraint ${\Vert z \Vert^{2}\over 2}+N_{\text{th}} = N_{S}$, the QFI (\ref{eqn:qfi1}) is ${N_{S}\over \kappa}$ at the lowest order ($N_{B}^{0}$ and $\kappa^{-1}$). Therefore, the distribution of energy between coherent lasing and incoherent thermal noise is immaterial to the sensitivity obtainable with the probe state $\rho_{z,N_{\text{th}}}$. In the opposite parameter domain $N_{B}\gg N_{S}$, the QFI is ${\Vert z \Vert^{2}\over 4\kappa N_{B}}$ to order $\kappa^{-1}$ and $N_{B}^{-1}$, which indicates that all intensity should be distributed to coherent lasing. The same distribution of intensity is optimal for the domain $1\ll N_{B}\ll N_{S}$ (e.g., strong microwave probe signal), for which the QFI is ${\Vert z\Vert^{2}\over 2\kappa(2N_{B}+1)}$ at order $\kappa^{-1}$.

 Although (\ref{eqn:qfi1}) is exact, the setting of quantum radar is described by $\kappa \ll 1$. In this domain, the leading order contribution to $\mathcal{F}(\kappa)$ is $O(\kappa^{-1})$. Therefore, we compare the performance of any two estimation schemes by the limiting value of the ratio of their QFI values as $\kappa\rightarrow 0$. For example, consider the two mode squeezed state transmitter 
 \begin{equation}
\ket{\psi_{\text{TMSS}}}_{SI}:=(U_{\text{TM}}(g))_{SI}\ket{0}_{S}\ket{0}_{I}
 \end{equation}
 in which the transmitted mode $S$ is entangled with a local quantum memory mode $I$ by the unitary two-mode squeezing operation $(U_{\text{TM}}(g))_{SI}:= e^{g(a_{S}^{\dagger}a_{I} - h.c.)}$, resulting in both modes having expected photon number $N_{S}=\sinh^{2}g$. The thermal attenuator $\mathcal{N}_{\kappa,N_{B}}^{(S\rightarrow S)}$ acts on the $S$ mode according to (\ref{eqn:thermalnoise}), producing the output state $\mathcal{N}_{\kappa,N_{B}}^{(S\rightarrow S)}(\psi_{\text{TMSS}})$ on the $SI$ register. The QFI of the resulting state  is given by
 \begin{align}
     \mathcal{F}(\kappa)&= {N_{S}(N_{S}+1)-\kappa(N_{S}^{2}-N_{B}(2N_{S}+1))\over \kappa(1-\kappa)(1+(1-\kappa)(N_{S}+N_{B}+2N_{S}N_{B})} \nonumber \\
     &=  {N_{S}(N_{S}+1) \over 1+N_{B}+N_{S} + 2N_{B}N_{S}}\kappa^{-1}+O(\kappa^{0})
     \label{eqn:tmssqfi}
 \end{align}
 with a $10\log_{10}(1-\kappa)$ dB amplification from the coherent state when $N_{B}=0$. Note that the $I$ mode alone has no sensitivity to $\kappa$, and that when $N_{B}\ge {\kappa\over 1-\kappa}$ the state of $SI$ after the thermal attenuator is both unentangled (by checking the separability condition that the symplectic eigenvalues of the covariance matrix of the partially transposed state are at least $1/2$ when this condition holds \cite{PhysRevA.70.022318}) and classical (by checking that the eigenvalues of the covariance matrix are at least $1/2$ when this condition holds). The interesting physical point about the two-mode squeezed state transmitter is that for $N_{\text{th}}=0$ and $N_{S}\in \mathbb{N}_{\ge 0}$, it matches the QFI of a single mode Fock state $\ket{N_{S}}$ (see Table \ref{tab:noiselessfisher}), which is the optimal single mode transmitter. This shows that non-Gaussianity is not necessary for optimal performance. In the present section, the dependence of the QFI on the transmitter mode energy and the energy of the thermal environment mode scales as $O({N_{S}\over N_{B}})$ for all examples. This scaling will not occur for all of the quantum radar models that we introduce in Section \ref{sec:ttt} with the aim of describing the transmitter/receiver system in Fig. \ref{fig:zwmfig}. In Section \ref{sec:ub} we show how general couplings between $S$, $I$, and $E$ result in an upper bound on the QFI that scales linearly with $N_{B}$.

 \section{Upper bound from environment information\label{sec:ub}}
 We do not delve into the problem of determining how closely the noise-to-signal ratio for readouts such as intensity, homodyne, or heterodyne measurement come to saturating the QFI, noting only that because we are considering estimation of a single parameter, the QFI is achievable by the measurement of the symmetric logarithmic derivative operator \cite{PhysRevLett.72.3439,FUJIWARA1995119}. We note that, despite the fact that many non-classical Gaussian and non-Gaussian states have QFI which exhibits Heisenberg scaling $\mathcal{F}(\kappa)= {O(N_{S}^{2})\over \kappa(1-\kappa)}$ in a closed system where $U_{\kappa}$ acts on two modes $SI$ of an input state, the upper bound on the QFI in Theorem 1 of Ref.\cite{PhysRevLett.118.070803} precludes Heisenberg scaling for states of the form $\mathcal{N}^{(S\rightarrow S)}_{\kappa,N_{B}}(\psi_{SI})$, which includes the examples in the present section. However, for quantum radar schemes which are not described by that probe state, e.g., schemes with optically active receivers, the calculation of the QFI in the proof of that Theorem is not generally valid. For example, it is possible that the QFI increases linearly with $N_{B}$ instead of decreasing like $N_{B}^{-1}$. The theorem below states that if some information from the environment is coupled back to the receiver, then $O(N_{S}N_{B})$ scaling provides an upper bound. The upper bound is saturated to within a constant factor by using an SPDC transmitter and a receiver that mixes the $I$ and $E$ mode on 50:50 beamsplitter.
 \begin{theorem}
 \label{thm:ooo}
     For any state of the $SI$ register of the form
     \begin{equation}
         \mathrm{tr}_{E}\left[ V_{ISE}(U_{\kappa})_{SE}\psi_{IS}\otimes (\rho_{0,N_{B}})_{E} (U_{\kappa})_{SE}^{\dagger}V_{ISE}^{\dagger} \right]
         \label{eqn:mas}
     \end{equation}
     with $\ket{\psi}_{IS}$ a pure state such that $\langle a^{\dagger}_{S}a_{S}\rangle_{\psi_{IS}} = N_{S}$, and $V_{ISE}$ an unparametrized unitary, the QFI satisfies \begin{equation}\mathcal{F}(\kappa)\le {N_{S}+N_{B}+2N_{S}N_{B}\over \kappa(1-\kappa)}.\end{equation} If $\ket{\psi}_{IS}=\ket{\psi_{\mathrm{TMSS}}}_{IS}$ with $g=\log(\sqrt{N_{S}+1}+\sqrt{N_{S}})$, and $V_{ISE}=e^{i{\pi\over 4}(a_{I}^{\dagger}a_{E} + h.c.)}\otimes \mathbb{I}_{S}$, then
     \begin{equation}
         \mathcal{F}(\kappa) \sim {N_{S}+N_{B}+2N_{S}N_{B}\over 2\kappa}
     \end{equation}
     in the limit $\lim_{N_{B}\rightarrow \infty}\lim_{\kappa\rightarrow 0}$.
 \end{theorem}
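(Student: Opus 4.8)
The plan is to get the upper bound from monotonicity of the QFI under the parameter-independent channel producing the reduced $SI$ state, and the saturation from a Gaussian calculation localized at $\kappa=0$.

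For the bound, I would first purify the thermal mode as $(\rho_{0,N_B})_E = \Tr_{E'}\ketbra{\Phi}{\Phi}_{EE'}$ with $\ket{\Phi}_{EE'}$ a two-mode squeezed vacuum of mean photon number $N_B$, so that the $ISEE'$ state
\[
\ket{\Psi_\kappa} = (U_\kappa)_{SE}\,\ket{\psi}_{IS}\otimes\ket{\Phi}_{EE'}
\]
is pure and depends on $\kappa$ only through $U_\kappa = e^{i\theta(\kappa)A}$ with $A = a_S^\dagger a_E + a_E^\dagger a_S$ and $\theta(\kappa)=\cos^{-1}\sqrt{\kappa}$. In the $\theta$-parametrization this is a shift model, so $\mathcal{F}(\theta) = 4\,\Var(A)$ with the variance taken in the input state. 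Because the thermal mode obeys $\langle a_E\rangle = \langle a_E^2\rangle = 0$, $\langle a_E^\dagger a_E\rangle = N_B$, while only $\langle a_S^\dagger a_S\rangle = N_S$ is used, every admissible $\ket{\psi}_{IS}$ gives $\langle A\rangle = 0$ and $\langle A^2\rangle = N_S(N_B+1)+(N_S+1)N_B$, hence $\mathcal{F}(\theta) = 4(N_S+N_B+2N_SN_B)$. Converting with (\ref{eqn:transf}) gives $\mathcal{F}(\kappa) = (N_S+N_B+2N_SN_B)/(\kappa(1-\kappa))$ for $\ket{\Psi_\kappa}$, and since (\ref{eqn:mas}) is its image under tracing $E'$, applying the $\kappa$-independent $V_{ISE}$, and tracing $E$ — each a parameter-independent map — monotonicity of the QFI yields the inequality.

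For the saturation, every operation is Gaussian, so the $SI$ output is a zero-mean two-mode Gaussian state and it suffices to track $\Sigma_{SI}(\theta)$, obtained by conjugating the initial covariance matrix (TMSS of parameter $g$ on $IS$, thermal $N_B$ on $E$) by the symplectic maps of $U_\kappa$ on $SE$ and of the balanced beamsplitter $V$ on $IE$. Since $\Sigma_{SI}$ depends on $\kappa$ only through $\theta$, relation (\ref{eqn:transf}) reduces $\lim_{\kappa\to0}\kappa\,\mathcal{F}(\kappa)$ to $\tfrac14\mathcal{F}_{SI}(\theta=\tfrac\pi2)$, i.e. to one local QFI evaluation at $\kappa=0$. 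I would obtain $\Sigma_{SI}(\theta)$ from the second moments of $\tilde a_S = \sqrt{\kappa}\,a_S + i\sqrt{1-\kappa}\,a_E$ and $\tilde a_I = \tfrac{1}{\sqrt2}(a_I - \sqrt{1-\kappa}\,a_S + i\sqrt{\kappa}\,a_E)$ in the input state, using $\langle a_I a_S\rangle = \sqrt{N_S(N_S+1)}$ (the chosen $g$ gives $\cosh g = \sqrt{N_S+1}$, $\sinh g = \sqrt{N_S}$).

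At $\theta=\tfrac\pi2$ the cross-correlations vanish and $\Sigma_{SI}$ factorizes into a thermal $S$ mode of occupation $N_B$ and a pure single-mode squeezed $I$ mode (the balanced beamsplitter sends the TMSS to a product of single-mode squeezers), while $\del_\theta\Sigma_{SI}$ is purely off-diagonal and is dominated at large $N_B$ by $\del_\theta\langle\tilde a_S^\dagger\tilde a_I\rangle\sim N_B$. Evaluating the Gaussian QFI (\ref{eqn:qfiform}) — which for this zero-mean, high-occupation state is controlled at leading order in $N_B$ by the classical form $\tfrac12\Tr[(\Sigma_{SI}^{-1}\del_\theta\Sigma_{SI})^2]$ — the $q$- and $p$-sectors decouple and give $\mathcal{F}_{SI}(\tfrac\pi2)\sim 2N_B(1+2N_S)$, the factor $\tfrac12$ relative to the bound arising because the balanced beamsplitter routes only half the environmental coherence into the retained mode; dividing by $4$ reproduces the stated asymptotics. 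I expect the main obstacle to be the regularization flagged after (\ref{eqn:qfiform}): the $I$ mode is exactly pure at the evaluation point, so one symplectic eigenvalue equals $1/2$ and the fidelity formula must be regularized as in Ref.~\cite{saffuent}; one then has to confirm that the associated quantum correction and the subleading squeezing-dependent terms do not alter the leading $O(N_B)$ coefficient before the outer limit $N_B\to\infty$ is taken.
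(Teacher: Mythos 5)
Your proposal is correct and follows essentially the same route as the paper: the upper bound is obtained by purifying the thermal environment, observing that the purified dynamics is a shift model in $\theta$ with generator $a_S^{\dagger}a_E + h.c.$ whose variance in the input state is $N_S+N_B+2N_SN_B$, and invoking monotonicity of the QFI under the parameter-independent maps (the paper simply absorbs $V_{ISE}$ into the purification and notes it cannot change the purified QFI, which is equivalent to your ordering of the channels). For the saturating example the paper likewise reduces to a Gaussian covariance-matrix computation, but it plugs into the fidelity-based two-mode formula of Ref.~\cite{saffuent} and takes the series expansion in $\kappa$ about $0$ \emph{before} the expansion in $N_B$ about $\infty$ --- precisely the ordering that sidesteps the rank-change/regularization subtlety you correctly flag at $\theta=\pi/2$, so your single local evaluation at the pure-subsystem point does need the check you describe that it agrees with the $\kappa\to 0^{+}$ limit.
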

 \begin{proof}
    Define the state
     \begin{align}
         &{}\ket{\phi_{\kappa}}_{ISE_{1}E_{2}}:= V_{ISE_{1}}(U_{\kappa})_{SE_{1}}\sum_{k}\sqrt{q_{k}}\ket{\psi}_{IS}\ket{k}_{E_{1}}\ket{k}_{E_{2}}\nonumber
     \end{align}
     with $q_{k}:={1\over N_{B}+1}\left( {N_{B}\over N_{B}+1}\right)^{k}$, which is a purification of (\ref{eqn:mas}). Call the QFI of the parametrized state $\ket{\phi_{\kappa}}_{ISE_{1}E_{2}}$ by $\mathcal{G}(\kappa)$. That $\mathcal{F}(\kappa)\le \mathcal{G}(\kappa)$ follows from the monotonicity of QFI under partial trace. The operation $V_{ISE}$ has no effect on $\mathcal{G}(\kappa)$ because it is unparametrized. Writing $\ket{\psi}_{IS}=\sum_{m,n}c_{m,n}\ket{m}_{I}\ket{n}_{S}$ in full generality, one obtains from (\ref{eqn:transf}) that
     \begin{align}
         4\kappa(1-\kappa)\mathcal{G}(\kappa)&= 4\text{Var}_{\ket{\phi_{0}}}\left( a_{S}^{\dagger}a_{E} + h.c. \right) \nonumber \\
         &= 4\sum_{m,n,k}\vert c_{m,n}\vert^{2}q_{k}(n(k+1) + k(n+1)) \nonumber \\
         &= 4(N_{S}+N_{B} + 2N_{S}N_{B}).
     \end{align}

     The two-mode squeezed state $\ket{\psi_{\text{TMSS}}}_{SI}$ with $g=\log(\sqrt{N_{S}+1}+\sqrt{N_{S}})$ has energy $N_{S}$ in both the $S$ and $I$ modes. Since it is Gaussian, the covariance matrix of (\ref{eqn:mas}) is straightforwardly obtained by symplectic transformations. For two mode Gaussian states, the formula (\ref{eqn:qfiform}) can be written \cite{saffuent}
     \begin{equation}
         \mathcal{F}(\kappa)= {\del_{\kappa '}^{2}\Delta-2\del_{\kappa '}^{2}(\sqrt{\Gamma}+\sqrt{\Lambda})\over \sqrt{\Gamma}+\sqrt{\Lambda} - 1} \Big\vert_{\kappa ' =\kappa}
     \end{equation} where  $\Delta, \Gamma, \Lambda$ are functions of  $N_{S},N_{B},\kappa,\kappa'$ defined in Ref.\cite{PhysRevA.86.022340}. Taking the series expansions with respect to $\kappa$ about 0  and $N_{B}$ about $\infty$, in that order, gives the stated asymptotic.
 \end{proof}
 Comparing $\mathcal{G}(\kappa)$ for $N_{B}=0$ to the optimal performance of a single mode transmitter in Table \ref{tab:noiselessfisher}, this theorem can also be regarded as a ``no-go'' result for entanglement-assisted estimation of $\kappa$ for the class of quantum radar protocols described by state (\ref{eqn:mas}) at the receiver. 
 
\section{Induced coherence for quantum radar\label{sec:ttt}}

We now move to the main problem of the present paper, which is to address the possibility of utilizing SPDC sources at the both the transmitter and receiver in such a way that target-scattered photons from the first SPDC are indistinguishable from photons created at the same wavelength in the receiver SPDC. An optical diagram showing the transmitter and receiver structure is shown in Fig. \ref{fig:zwmfig}. Note that unlike an $SU(1,1)$ interferometer, there is not a $\pi$ phase shift of a pump between the two SPDC sources. Mathematically, this allows one to describe both SPDC sources by the same unitary operation, defined by positive nonlinearity parameter $g>0$.

The first model (Model 1) for describing this system treats the reflected light as seeding the second SPDC source NL$2$ at the receiver in Fig. \ref{fig:zwmfig}. The channel maps the $I_{1}I_{2}$ vacuum to
\begin{align}
    \text{tr}_{SE}\left[ (\mathcal{U}_{\text{TM}})_{SI_{2}}(\mathcal{U}_{\kappa})_{SE}(\mathcal{U}_{\text{TM}})_{SI_{1}} \left( \ket{0}\bra{0}_{SI_{1}I_{2}}\otimes (\rho_{0,N_{B}})_{SE} \right) \right]
    \label{eqn:model2}
\end{align}
where the calligraphic $\mathcal{U}$ just gives the usual Schr\"{o}dinger action $U(\cdot)U^{\dagger}$ of the specified unitary on the specified registers. We refer to (\ref{eqn:model2}) as Model 1. Note that the received state is on the register $I_{1}I_{2}$, which contains photons that were never transmitted to or reflected from the target. However, $I_{1}I_{2}$ contains information about $\kappa$ due to entanglement with $S$. The register $I_{1}$ can be interpreted as a quantum memory register that stores photons entangled with the transmitted (signal) mode $S$ from the transmitter SPDC process, whereas the register $I_{2}$ contains photons newly created at the receiver SPDC process.  This description coincides with the first non-perturbative descriptions of the Zou-Wang-Mandel experiment \cite{BELINSKY1992303,WISEMAN2000245}, and was later utilized to analyze the ultimate sensitivity of an interferometry scheme based on path identity induced quantum coherence \cite{Plick_2010}. The perturbative description of this experiment \cite{PhysRevA.41.1597,PhysRevA.44.4614,PhysRevA.92.013832}, while convenient in terms of detailed descriptions of temporal, spatial, and polarization structure, is not sufficient to distinguish the two models for quantum radar with unreflected photons that we consider in the present paper. Therefore, maintaining a general positive value of the SPDC nonlinearity $g$, the QFI for Model 1 described by (\ref{eqn:model2}) is calculated from (\ref{eqn:qfiform}) to be
\begin{align}
    \mathcal{F}(\kappa)&:= {N_{S}\over \kappa(1-\kappa)}{G_{1}(N_{S},N_{B},\kappa) \over G_{2}(N_{S},N_{B},\kappa)} \nonumber \\
    G_{1} &=  (N_{S}+1)(1+N_{S}+N_{S}N_{B})  \nonumber \\
    &{}  +\kappa (N_{B}-N_{S})(1+N_{S}+N_{B}+2N_{S}N_{B})  \nonumber \\
    &{}  +\kappa^{2}N_{B}(N_{S}^{2}-N_{B}(2N_{S}+1))  \nonumber \\
    G_{2} &=(N_{S}+1 + (1-\kappa)N_{B}N_{S}) \nonumber \\
    &{} \cdot (2+(2-\kappa)N_{S} + (1-\kappa)N_{B}(2N_{S}+1))
\end{align}
with small $\kappa$ expansion
\begin{equation}
    \mathcal{F}(\kappa)= {N_{S}(N_{S}+1) \over 2+N_{B}+2N_{S} + 2N_{B}N_{S}}\kappa^{-1}+O(\kappa^{0}).
    \label{eqn:model2expansion}
\end{equation}
Note that unlike the two-mode squeezed state transmitter, in which modes $SI$ are used estimate $\kappa$, the modes $I_{1}I_{2}$ that are utilized to estimate $\kappa$ in Model 1 were never reflected from the target. Despite this fact, for $N_{B}=0$ and $\kappa\ll 1$ one finds from (\ref{eqn:model2expansion}) that the QFI for Model 1 is half of that for the two-mode squeezed state transmitter in (\ref{eqn:tmssqfi}). Although (\ref{eqn:model2}) would have the form of the state (\ref{eqn:mas}) in Theorem \ref{thm:ooo} if the trace over $S$ were not taken, the $(U_{\text{TM}})_{SI_{2}}$ commutes with $\text{tr}_{E}$, so the bound from Ref.\cite{PhysRevLett.118.070803} is applicable and tighter. If Model 1 is indeed a viable description of Fig. \ref{fig:zwmfig}, one expects that many advantages of quantum imaging with undetected photons (e.g., robustness of estimation in the presence of transmitter-wavelength-specific noise channels) \cite{bl22}, carry over to the setting of quantum radar with unreflected photons.

Besides the concern that Model 1 does not accurately describe the nonlinear optical system due to its circuit structure, it is also notable that increasing the thermal background $N_{B}$ reduces the sensitivity monotonically. However, it is not clear that increasing $N_{B}$ should decrease the sensitivity in the scheme of Fig.\ref{fig:zwmfig}. In particular, the target reflection $U_{\kappa}$ causes thermal light to seed the receiver SPDC, so increasing the thermal background increases the number of photons in the register $I_{1}I_{2}$. These photons carry information about $\kappa$ due to having scattered from the target, so could potentially increase the QFI.

The second model for describing Fig.\ref{fig:zwmfig} exhibits monotonically increasing QFI with respect to $N_{B}$ for $\kappa \ll 1$ and $N_{B}\gg N_{S}$. However, in this description, information about the parameter $\kappa$ only appears by its modification of the SPDC coupling of the $I_{1}$ mode to the $S$ and $E$ modes in the global Hamiltonian. Specifically, the channel maps the $I_{1}I_{2}$ vacuum to
\begin{align}
    &{} \text{tr}_{SE}\left[ e^{-igH_{SI_{1}I_{2}E}}\ket{0}\bra{0}_{SI_{1}I_{2}}\otimes (\rho_{0,N_{B}}(\kappa))_{SE} e^{igH_{SI_{1}I_{2}E}}\right] \nonumber \\
    &{} H_{SI_{1}I_{2}E} = i\left( (\sqrt{\kappa}a_{S}^{\dagger} - i\sqrt{1-\kappa}a_{E}^{\dagger})a_{I_{1}}^{\dagger} + a_{S}^{\dagger}a_{I_{2}}^{\dagger} - h.c. \right)
    \label{eqn:model1}
\end{align}
where $(\rho_{0,N_{B}}(\kappa))_{SE}$ is a thermal state of the mode $\sqrt{\kappa}a_{E}^{\dagger} -i\sqrt{1-\kappa}a_{S}^{\dagger}$ with occupation $N_{B}$.
This channel describes the scheme in Fig. \ref{fig:zwmfig} as the unitary time dynamics of a non-linear optics system, with no circuit structure imposed. Although in Fig. \ref{fig:zwmfig}, the thermal state $(\rho_{0,N_{B}}(\kappa))_{SE}$ appears as the output state of a beamsplitter circuit element, physically this state is just an initial condition, namely the state of the $SE$ register when no transmitter is sent. For example, when no target is present ($\kappa=0$), the thermal state becomes $(\rho_{0,N_{B}}(\kappa))_{SE} = (\rho_{0,N_{B}})_{S}$, which expresses the fact that the received $S$ mode is indistinguishable from thermal environment. Alternatively, for $\kappa\neq 0$, the target is present and partially shields the $S$ mode from thermal background. Note that in (\ref{eqn:model1}), the term in the Hamiltonian $a_{S}^{\dagger}(\sqrt{\kappa}a_{I_{1}}^{\dagger} + a_{I_{2}}^{\dagger}) - h.c.$ indicates the induced coherence between $I_{1}$ and $I_{2}$ modes that exists when the target is present due to the assumption of perfect indistinguishability of the reflected echo and receiver SPDC photons created in the $S$ mode.

The transmitter state is also not described by a quantum circuit, but rather by specifying the two relevant SPDC processes in the Hamiltonian $H_{SI_{1}I_{2}E}$: 1. the transmitter SPDC source simultaneously creates an  $I_{1}$ photon and a photon in superposition between $S$ and $E$ modes with the reflectivity $\kappa$ determining the relative amplitude, 2. the receiver SPDC source creates photons in $S$ and $I_{2}$ with the bare amplitude $g$. The covariance matrix of the $I_{1}I_{2}$ subsystem is calculated from (\ref{eqn:model1}) by finding the Heisenberg picture dynamics generated by $H$, calculating the full covariance matrix, then restricting to the $I_{1}I_{2}$ subsystem. At $\kappa=0$, this subsystem has the same total energy $N_{S}(N_{B}+2)$ as Model 1. Using a vectorized expression for the QFI of Gaussian states based on the symmetric logarithmic derivatives \cite{Gao2014} and a computer algebra system, we analytically obtained the leading order $O(\kappa^{-1})$ contribution to the QFI for $N_{B}>0$, with the exact value for $N_{B}=0$ shown in Table \ref{tab:noiselessfisher}. Taking also $N_{S}\gg 1$ one obtains
\begin{equation}
    \mathcal{F}(\kappa)= {\left( (N_{B}+2)g(N_{S})-N_{B} \right)^{2} \over 8\kappa(1+N_{B})}  +\tilde{O}(N_{S}^{-1}\kappa^{-1})+O(\kappa^{0})
    \label{eqn:zwmqfi}
\end{equation}
where $g(N_{S}):= \log\left( \sqrt{N_{S}+1}+\sqrt{N_{S}} \right)$ (from the relation $N_{S}=\sinh^{2}g$ for the energy of the transmitter mode $S$, and with $\tilde{O}$ suppressing polylogarithmic factors in $N_{S}$. Therefore, although the logarithmic dependence on $N_{S}$ is unavoidable, the thermal background $N_{B}$ compensates the QFI. Since the thermal background is an environmental source, one can consider Model 2 to be an active-passive quantum radar system, with signal transmission being necessary for nonzero sensitivity, but with increasing thermal background  providing linearly more information about the reflectivity of the target (Fig. \ref{fig:qfifig}).

\begin{figure*}[t]
    \centering
    \includegraphics[scale=0.6]{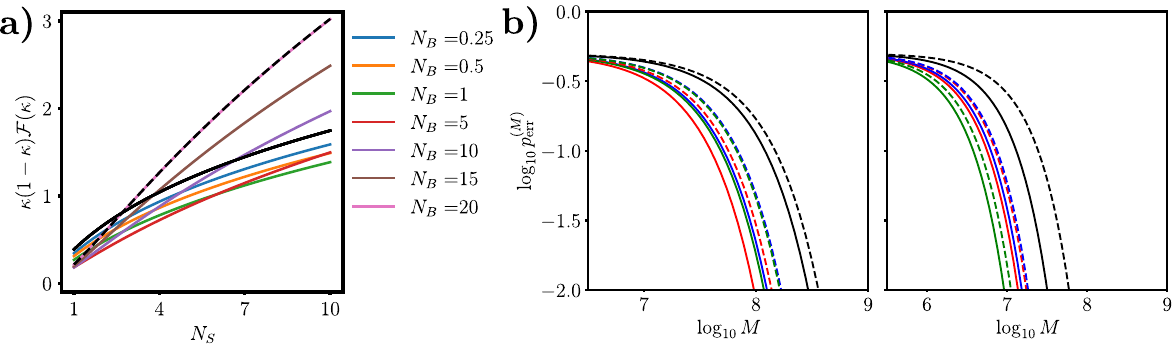}
    \caption{a) QFI scaled by $\kappa(1-\kappa)$ for Model 2 for increasing $N_{B}$, with $\kappa = 10^{-3}$. Black line is the analytical QFI for $N_{B}=0$, dashed black line is the $\kappa \rightarrow 0$ asymptotic analytical QFI (this function fits the other curves also, but is not shown). b) Optimal error probability using the QCE for $N_{B}=20$, $\kappa = 10^{-4}$ and $N_{S}=10^{-2}$ (left) or $N_{S}=10^{-1}$ (right). Dashed lines show upper bounds from (\ref{eqn:uiui}) (note that the asymptotic in (\ref{eqn:zwmqfi}) is not valid for these $N_{S}$; one should use the full expression). Black is coherent state, red is two-mode squeezed state, blue is Model 1, green is Model 2.}
    \label{fig:qfifig}
\end{figure*}

It is interesting that although Model 2 exhibits linearly increasing QFI with $N_{B}$, it does not have the form of the state (\ref{eqn:mas}) in Theorem \ref{thm:ooo}, so is excluded from that result. Clearly, the upper bound in Theorem \ref{thm:ooo} is tailored to quantum circuit-based descriptions of quantum radar schemes. However, we hypothesize that the linear scaling with respect to $N_{B}$ in Model 2 can be explained by the effect of the environment mode on the effective coupling between the $I_{1}I_{2}$ modes in (\ref{eqn:model1}), which physically plays a similar role to the quantum circuit element $V_{ISE}$ in (\ref{eqn:mas}). It is clear that under Model 1, photons from $I_{2}$ never occur simultaneously with photons from $I_{1}$ and $E$.

 We now provide an algebraic argument that is relevant to understanding equivalent nonlinear optical processes for quantum circuits that contain squeezing or SPDC elements. By enumerating the $8\times 8$ matrices defining the Lie algebra $\mathfrak{sp}(8,\mathbb{R})$, one can attempt to numerically solve the equation
 \begin{equation}
    C^{\dagger}\begin{pmatrix}
         \vec{a}\\\vec{a}^{\dagger}
     \end{pmatrix}C = e^{tx + i\sum_{j}\alpha_{j}Q_{j}}\begin{pmatrix}
         \vec{a}\\\vec{a}^{\dagger}
     \end{pmatrix}
 \end{equation}
 for the matrix $x$ and $\alpha_{j} \in \mathbb{R}$, where $C$ is a given circuit $C$ composed of bosonic Gaussian unitaries on $SI_{1}I_{2}E$ and the $Q_{j}$ are $8\times 8$ matrices constituting a system of algebraic charges not contained in $\mathfrak{sp}(8,\mathbb{R})$. By associating a solution matrix $x$ with its bosonic counterpart \cite{perelomov}, and associating $\alpha$ with a charge appended to $\mathfrak{sp}(8,\mathbb{R})$ \cite{burgarth2024central}, a solution allows one to identify a quadratic Hamiltonian that corresponds to $C$ up to a phase shift between the even and odd photon number sectors of the Hilbert space of four optical modes. Although we found that Model 1 can be written as the unitary dynamics generated by a Hamiltonian quadratic in the creation and annihilation operators (i.e., Model 1 is a bosonic Gaussian dynamics, with $Q_{j}=0$), this Hamiltonian involves the SPDC process $a_{E}^{\dagger}a_{I_{2}}^{\dagger}+h.c.$. Such a process is evidently not contained in the optical diagram in Fig. \ref{fig:zwmfig}, which casts some doubt that Model 1 correctly describes the transmitter/receiver system.
 
In addition to comparisons based on the QFI, it is instructive to compare Model 1 and Model 2 in the task of target detection, which we describe by symmetric hypothesis testing. The relevant information theoretic quantity is the minimal probability or error $p_{\text{err}}^{(M)}$ in the task distinguishing $M$ copies of the state of the receiver when $\kappa \neq 0$ from the state of the receiver when $\kappa=0$, and we allow global measurements of the $M$ copies \footnote{In the setting of quantum radar, the copies can be considered as the echos from independent transmissions interrogating the target. A global measurement of the $M$ copies therefore assumes another quantum memory that stores the echos at the receiver. Near-term local measurement schemes have been proposed via sequential detection \cite{PhysRevApplied.20.014030}.}. In the limit of large $M$, the quantum Chernoff theorem identifies the logarithm of this error probability with the quantum Chernoff exponent (QCE) denoted by $\mathcal{C}(\rho_{0},\rho_{\kappa})$ according to $\lim_{M\rightarrow \infty}-{1\over M}\ln p_{\text{err}}^{(M)}(\rho_{0},\rho_{\kappa}) =\mathcal{C}(\rho_{0},\rho_{\kappa})$. Technical definitions of the error probability and QCE can be found in Ref. \cite{wp}, and rigorous analytical calculations of the QCE for coherent state and one-mode and two-mode squeezed state transmitters appear in Ref.\cite{Volkoff_2024}. Although upper bounds on $p_{\text{err}}^{(M)}$ are known in the case of Gaussian states in terms of the symplectic spectra of their covariance matrices \cite{PhysRevA.78.012331,PhysRevA.84.022334}, under certain conditions it is also possible to obtain two-sided bounds in terms of the QFI. Such bounds allow one to relate the QFI values obtained above to the target detection problem. Specifically, in a neighborhood of $\kappa=0$ for which the square root of the quantum fidelity $\sqrt{F(\rho_{\theta={\pi\over 2}},\rho_{\theta})}$ \cite{nielsen} is less than its quadratic approximation, one finds that
\begin{align}
    p_{\text{err}}^{(M)} 
    &\le {1\over 2}e^{-{M\over 2}\left( {\pi\over 2} - \cos^{-1}\sqrt{\kappa}\right)^{2}\lim_{\kappa\rightarrow 0^{+}}\kappa(1-\kappa)\mathcal{F}(\kappa)}.
    \label{eqn:uiui}
\end{align}
To see this, first note that the QFI at $\kappa=0$ is not defined, so to analyze the fidelity at second order, one should consider the receiver states as a function of the angle $\theta = \cos^{-1}\sqrt{\kappa}$, $\theta \in [0,{\pi\over 2}]$ and note that $p_{\text{err}}^{(M)}$ is independent of the parameterization of the hypothesis states. The Fuchs-van de Graaf inequality \cite{761271} then gives
\begin{align}
    p_{\text{err}}^{(M)}(\rho_{0},\rho_{\kappa})& \le {1\over 2}F(\rho_{\theta = \pi/2},\rho_{\theta})^{M/2}\nonumber \\
    &\le {1\over 2}e^{-{M\over 8}\mathcal{F}(\theta)\vert_{\theta = {\pi/2}}(\theta - {\pi\over 2})^{2}}
\end{align}
where the second line used the assumed neighborhood of $\theta=\pi/2$. Using (\ref{eqn:transf}) to reparameterize the state manifold gives (\ref{eqn:uiui}). Numerical values of $\log_{10}{1\over 2}e^{-M\mathcal{C}(\rho_{0},\rho_{\kappa})}$ are shown in Fig. \ref{fig:qfifig}b. The plots show that there are operating regimes of quantum radar in which utilization of only unreflected photons can actually improve the minimal detection error probability. The scaling of the QFI with respect to $N_{B}$ and $N_{S}$ combined with the bound in (\ref{eqn:uiui}) suggests that the error probability for Model 2 can be arbitrarily lower compared to a classical transmitter, two-mode squeezed state transmitter, or the circuit-based Model 1.

\section{Discussion}
Despite its modest advantages compared with other quantum metrology settings such as optical phase estimation, reflectivity estimation is crucial to understanding the performance of quantum radar systems. The results of the present work provide two descriptions of a quantum radar scheme that exploits quantum entanglement at both the transmitter and receiver, along with indistinguishability of the optical echo and newly produced receiver photons, to estimate or detect $\kappa$ using photons than never incur the $R^{-4}$ attenuation associated with propagation along the transmitter-target-receiver path. If SPDC processes are used to produce the entanglement, the measured photons do not even have to be of the same frequency as the transmitted mode. The Hamiltonian dynamics model of Fig. \ref{fig:zwmfig} also exhibits the property of enhanced sensitivity with increasing thermal background. This unusual property is also the source of the reduced detection error compared even to other quantum radar schemes. Inspection of the circuit-based dynamics (\ref{eqn:model2}) and Hamiltonian dynamics (\ref{eqn:model1}) suggests that environment photon-mediated coupling between the $I_{1}$ and $I_{2}$ modes is the source of the qualitative performance differences between Model 1 and Model 2 as radar systems.

In a practical quantum radar scheme based on induced coherence by path alignment, it would be challenging to guarantee indistinguishability of a reflected photon and a spatially aligned photon produced by the receiver SPDC due to the differential phase shift. In other words, it would be a serious challenge to guarantee that the same optical mode $S$ is common to all operations in (\ref{eqn:model2}). A potential solution to this challenge is to utilize frequency comb pumps for the SPDC processes with the aim of ensuring phase coherence of the reflected photons and their indistinguishable counterparts produced at the receiver SPDC \cite{dalprep}. Assuming that the induced coherence can be achieved, future experiments on multiparameter optical phase estimation in the presence of asymmetric transmission \cite{PhysRevA.102.013704,PhysRevA.106.023716,PhysRevA.107.043704} may prove crucial to understanding the correct quantum dynamics. More generally, the present results provide a foundation for understanding quantum information processing applications of more general optical downconversion networks with indistinguishable output modes.

\acknowledgements
The author thanks N. Dallmann, D. Dalvit,  L. Davidovich, and R. Newell for useful discussions.  Los Alamos National Laboratory is managed by Triad National Security, LLC, for the National Nuclear Security Administration of the U.S. Department of Energy under Contract No. 89233218CNA000001.
\bibliography{np.bib}
\bibliographystyle{unsrt}


\end{document}